\newtheorem{theorem}{Theorem}
\newtheorem{proposition}{Proposition}
\theoremstyle{definition}
\newtheorem{example}{Example}
\newcommand{\Rb}{\mathbb R} %real
\newcommand{\Cb}{\mathbb C} %complex
\newcommand{\half}{\tfrac{1}{2}} %half
\newcommand{\hh}{\mathcal{H}} %Hilbert space
\newcommand{\kk}{\mathcal{K}} %Hilbert space
\newcommand{\lh}{\mathcal{L(H)}} %bounded linear operators
\newcommand{\lk}{\mathcal{L(K)}} %bounded linear operators
\newcommand{\ip}[2]{\left\langle\,#1\,|\,#2\,\right\rangle} %inner product
\newcommand{\ket}[1]{|#1\rangle} %ket
\newcommand{\kb}[2]{|#1\rangle\langle#2|} %ketbra
\newcommand{\no}[1]{\left\|#1\right\|} %norm
\newcommand{\tr}[1]{\textrm{tr}\left[#1\right]} %trace
\newcommand{\id}{\mathbbm{1}} %identity operator
\newcommand{\va}{\mathbf{a}} %a
\newcommand{\vsigma}{\boldsymbol{\sigma}} %sigma
\newcommand{\obs}{\mathcal{O}}%set of observables
\newcommand{\A}{\mathsf{A}}%generic observable
\newcommand{\B}{\mathsf{B}}%generic observable
\newcommand{\C}{\mathsf{C}}%generic observable
\newcommand{\Eo}{\mathsf{E}}%generic observable
\newcommand{\F}{\mathsf{F}}%generic observable
\newcommand{\G}{\mathsf{G}}%generic joint observable
\newcommand{\chan}{\mathcal{C}}%set of channels
\newcommand{\Ii}{\mathcal{I}}
\newcommand{\csigma}{\sigma_{\mathrm{c}}} 
\newcommand{\ctau}{\tau_{\mathrm{c}}}
\newcommand{\clo}{\mathfrak{c}}
\newcommand{\simu}{\mathfrak{sim}} %sim closure
\newcommand{\leak}{\mathfrak{leak}} %leak closure
\newcommand{\joint}{\mathfrak{joint}} %joint closure
\newcommand{\comp}{\hbox{\hskip0.85mm$\circ\hskip-1.2mm\circ$\hskip0.85mm}}
\def\@settitle{\begin{center}%
  \baselineskip14\p@\relax
  \bfseries
  \uppercasenonmath\@title
  \@title
  \ifx\@subtitle\@empty\else
     \\[1ex]\@subtitle
     
  \fi
  \end{center}%
}
\def\subtitle#1{\gdef\@subtitle{#1}}
\def\@subtitle{}
\begin{document}

\title[]{Noise-disturbance relation and the Galois connection of quantum measurements}
\subtitle{\emph{Dedicated to the memory of Paul Busch}}

\author[]{Claudio Carmeli \and Teiko Heinosaari \and Takayuki Miyadera \and Alessandro Toigo
}

\maketitle

\begin{abstract}
The relation between noise and disturbance is investigated within the general framework of Galois connections.
Within this framework, we introduce the notion of leak of information, mathematically defined as one of the two closure maps arising from the observable-channel compatibility relation. We provide a physical interpretation for it, and we give a comparison with the analogous closure maps associated with joint measurability and simulability for quantum observables.
\end{abstract}

\section{Introduction}
A fundamental fact about quantum measurements is the following: measurement that does not cause any disturbance cannot give any information on the measured system. 
One of the most compact and instructive proofs of this fact, using only the basics of functional analysis, was presented by Paul Busch in \cite{Busch09}.
This no-go theorem motivates for further investigation, namely, to analyze what kind of noise must be tolerated for certain kind of disturbance, and vice versa, what is the minimal possible disturbance if certain noise is accepted. 
The aim of this paper is to provide some insight into one aspect of this general question.

A simplified but useful framework to think of measurements is to consider them as devices that have an input port for the measured system and two different ports for the output, one that gives the measurement outcome distribution and the other one that gives the transformed state. 
If we only consider the measurement outcomes we have an observable, while considering only the transformed state yields a channel.
A quantum observable and a quantum channel are called compatible if they are parts of a single measurement device, otherwise they are incompatible. 
In this language, the no-information-without-disturbance theorem states that the identity channel is compatible only with coin tossing observables.

The qualitative noise-disturbance relation, presented in \cite{HeMi13} and further developed in \cite{HeMi15,HeMi17,HaHeMi18}, characterizes the compatible channels for any given observable: the set of compatible channels is a principal ideal, generated by the so-called least disturbing channel of that observable.
We would like to point out that the work that led to \cite{HeMi13} started when Paul recommended two of the authors, not known to each other before, to meet for a scientific interaction.
Paul's encouragement, advice and support were important for that work, as they were for many of our works before and after that. 

The qualitative noise-disturbance relation leads to the following conclusion: if we know all compatible channels of an unknown observable, then we can recover that observable up to post-processing equivalence.
Therefore, a natural generalization of the qualititative noise-disturbance relation is to consider the set of all compatible channels for a collection of observables, instead of a single observable. 
The mathematical framework to investigate this correspondence is the Galois connection induced by the compatibility relation. 
Forming the Galois connection gives immediately two closure maps, one on the set of observables and another one on the set of channels. 
The physical interpretation of the maps involved in the Galois connection is not anymore as direct as in the qualitative noise-disturbance relation.
We will explain how the closure map on the set of observables gives a mathematical description of information leak. 

This paper is organized as follows.
In Section \ref{sec:qndr} we recall the qualitative noise-disturbance relation and some other background concepts and results. 
In Section \ref{sec:galois} we formulate the Galois connection of observables and channels and derive some of its properties.
The physical interpretation of one of the resulting closure maps is explained in Section \ref{sec:leak}. 
Finally, in Section \ref{sec:joint} we form another Galois connection and compare the resulting closure map with the previously obtained closure map.

%%%%%%%%%%%%%%%%%%%%%%%%
\section{Qualitative noise-disturbance relation}\label{sec:qndr}
%%%%%%%%%%%%%%%%%%%%%%%

%%%%%%%%%%%%%%%%%%%%%%%%
\subsection{Preliminaries and notations}
%%%%%%%%%%%%%%%%%%%%%%%

In the following, we always deal with finite dimensional quantum systems. We fix one of such systems, and denote by $\hh$ its associated Hilbert space. We let $\lh$ be the linear space of all complex linear operators on $\hh$, and write $\id$ for the identity operator.

An {\em observable} with outcomes in a finite set $\Omega$ is a map $\A:\Omega\to\lh$ such that $\A(\omega)$ is a positive operator for all $\omega\in\Omega$, and $\sum_\omega \A(\omega) = \id$. A {\em channel} with output in a quantum system with associated Hilbert space $\kk$ is a completely positive (CP) map $\Lambda : \lh\to\lk$ such that $\tr{\Lambda(T)} = \tr{T}$ for all $T\in\lh$. We denote by $\obs$ the set of all observables and by $\chan$ the collection of all channels. In our definitions of $\obs$ and $\chan$, the Hilbert space $\hh$ is fixed; however, we allow for all possible finite outcome sets $\Omega$ and finite dimensional output Hilbert spaces $\kk$.

An {\em instrument} with outcome set $\Omega$ and output $\lk$ is a collection of CP maps $\Ii = \{\Ii_\omega:\lh\to\lk \mid \omega\in\Omega\}$ such that $\Ii^\chan := \sum_\omega \Ii_\omega$ is a channel; we call it the {\em associated channel} of $\Ii$. We can also define an {\em associated observable} $\Ii^\obs : \Omega\to\lh$, given by $\tr{T\Ii^\obs(\omega)} = \tr{\Ii_\omega(T)}$ for all $T\in\lh$.

An observable $\A$ and a channel $\Lambda$ are {\em compatible} if there exists an instrument $\Ii$ such that $\Ii^\chan = \Lambda$ and $\Ii^\obs = \A$; in this case, we use the shorthand notation $\A\comp\Lambda$. Otherwise, $\A$ and $\Lambda$ are called {\em incompatible}.
Concrete examples of compatible and incompatible pairs can be found in \cite{HeReRyZi18}, where the compatibility relation on certain classes of qubit observables and channels is fully determined. 

For fixed $\A\in\obs$ and $\Lambda\in\chan$, we introduce the following sets associted to the compatibility relation:
\begin{equation}\label{eq:def_sigma_tau_singleton}
\csigma(\A) = \{ \Gamma \in \chan \mid \A \comp \Gamma \}\,, \qquad \ctau(\Lambda) = \{ \B \in \obs \mid \B \comp \Lambda \} \, .
\end{equation}
The main goal of this paper is to study these sets. In the following section, we will extend the previous definitions by replacing $\A$ and $\Lambda$ with collections of observables and channels, respectively. We will show that such natural extensions have a clear operational meaning, and then investigate their properties. 

%%%%%%%%%%%%%%%%%%%%%%%%
\subsection{Qualitative noise-disturbance relation}
%%%%%%%%%%%%%%%%%%%%%%%

The sets $\obs$ and $\chan$ have operationally motivated preorders, and the qualitative noise-dis\-tur\-bance relation links these preorders. 
The preorders in question are the post-processing preorders;
for two observables $\A$ and $\B$, we denote $\A \preccurlyeq \B$ if $\A = \mu \circ \B$ for some stochastic matrix (also called stochastic kernel or Markov kernel) $\mu$, where 
\begin{equation*}
(\mu \circ \B)(\omega') = \sum_\omega \mu(\omega',\omega) \B(\omega) \, .
\end{equation*}
Analogously, for two channels $\Lambda$ and $\Gamma$, we denote $\Lambda \preccurlyeq \Gamma$ if $\Lambda = \Theta \circ \Gamma$ for some channel $\Theta$, where $\Theta \circ \Gamma$ is the usual composition of maps.
We say that $\A$ and $\B$ are {\em equivalent} and denote it by $\A \simeq \B$ if both $\A \preccurlyeq \B$ and 
$\B \preccurlyeq \A$ hold. The equivalence relation $\Lambda \simeq \Gamma$ is defined in a similar way. 

If we look at the corresponding equivalence classes, these preorder relations become partial orderings. It is immediate to see that the set \(\chan/\simeq\) has the greatest element, which is the equivalence class of the \emph{identity channel}.
This equivalence class is explicitly described in \cite{NaSe07}. 
The set \(\chan/\simeq\) has also the lowest element, which is the set of all \emph{completely depolarizing channels}, i.e., all channels of the form $\Lambda(T) = \tr{T}\eta$ for some fixed state $\eta$ \cite[Prop.~10]{HeMi17}.  

The partial order structure of \(\obs/\simeq\) is more subtle and it was clarified in \cite{MaMu90a}. 
All trivial observables are equivalent and define the lowest element. 
Here, we recall that a {\em trivial observable} (coin-tossing observable) is any observable of the form $\A(\omega) = p(\omega) \id$ for some probability 
distribution $p: \Omega \to [0,1]$.
On the other hand, there is no greatest element: maximal observables are exactly those whose all nonzero operators are rank-\(1\), and there is infinitely many different equivalence classes of maximal observables.

The preorder structure described above underlies the formulation of the qualitative noise-disturbance relation. 
It translates into the earlier notation as follows.

\begin{theorem}[Theorems 1 and 2 of \cite{HeMi13}]\label{thm:qnd}
\begin{enumerate}[(a)]
\item (Existence of a least disturbing channel for a given observable.) For any observable $\A\in\obs$, let $(V,\kk,\hat{A})$ be a Naimark dilation of $\A$; i.e., $\kk$ is a Hilbert space, $V:\hh\to\kk$ is an isometry and $\hat{\A}:\Omega\to\lk$ is a projection-valued-measure such that $V^*\hat{\A}(\omega)V = \A(\omega)$ for all $ \omega\in\Omega$. 
Then, we have 
\begin{align}\label{eq:a_thm_qnd}
\csigma(\A) = \{ \Lambda \in \chan \mid \Lambda \preccurlyeq \Lambda_\A \}\,,
\end{align}
where the channel $\Lambda_\A : \lh\to\lk$ is defined as
\begin{equation}\label{eq:least_dist}
\Lambda_\A(T) = \sum_\omega \hat{\A}(\omega) V T V^* \hat{\A}(\omega) \, .
\end{equation}
\item (The noise-disturbance trade-off.) For two observables $\A,\B\in\obs$, the following equivalence holds:
\begin{align}\label{eq:b_thm_qnd}
\csigma( \A ) \subseteq \csigma( \B  ) \quad \Leftrightarrow \quad \B \preccurlyeq \A \, .
\end{align}
\end{enumerate}
\end{theorem}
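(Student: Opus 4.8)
The plan is to prove both directions of the equivalence \eqref{eq:b_thm_qnd} by reducing everything to part (a), which characterizes $\csigma(\A)$ as the principal ideal generated by the least disturbing channel $\Lambda_\A$ in the post-processing preorder. The key observation is that once $\csigma(\A)$ is described as $\{\Lambda : \Lambda \preccurlyeq \Lambda_\A\}$, the set inclusion $\csigma(\A)\subseteq\csigma(\B)$ becomes a statement purely about the preorder $\preccurlyeq$ on $\chan$, and we just need to connect $\Lambda_\A\preccurlyeq\Lambda_\B$ back to $\B\preccurlyeq\A$.

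For the implication $\B\preccurlyeq\A \Rightarrow \csigma(\A)\subseteq\csigma(\B)$, I would argue as follows. Suppose $\B = \mu\circ\A$ for a stochastic matrix $\mu$. Take any $\Gamma\in\csigma(\A)$, so there is an instrument $\Ii$ with $\Ii^\obs=\A$ and $\Ii^\chan=\Gamma$. I then build a new instrument $\Ji$ by post-processing the outcomes of $\Ii$ according to $\mu$, i.e. $\Ji_{\omega'} = \sum_\omega \mu(\omega',\omega)\,\Ii_\omega$. One checks directly that $\Ji$ is a collection of CP maps, that $\Ji^\chan = \sum_{\omega'}\Ji_{\omega'} = \sum_\omega \Ii_\omega = \Gamma$, and that $\Ji^\obs = \mu\circ\A = \B$ using the defining relation $\tr{T\,\Ii^\obs(\omega)}=\tr{\Ii_\omega(T)}$ together with linearity. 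Hence $\B\comp\Gamma$, so $\Gamma\in\csigma(\B)$. This direction is essentially routine bookkeeping with instruments and does not even require part (a).

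For the converse $\csigma(\A)\subseteq\csigma(\B) \Rightarrow \B\preccurlyeq\A$, I would use part (a) crucially. By \eqref{eq:a_thm_qnd}, $\Lambda_\A\in\csigma(\A)$, so the hypothesis gives $\Lambda_\A\in\csigma(\B)$, i.e. $\Lambda_\A\preccurlyeq\Lambda_\B$: there is a channel $\Theta$ with $\Lambda_\B = \Theta\circ\Lambda_\A$. Now $\B\comp\Lambda_\A$ means there is an instrument $\Ii$ with $\Ii^\obs=\B$, $\Ii^\chan=\Lambda_\A$; the point is to extract from the structure of $\Lambda_\A$ in \eqref{eq:least_dist} that any observable compatible with $\Lambda_\A$ must be post-processing dominated by $\A$. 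Concretely, $\Lambda_\A$ arises from measuring the Naimark PVM $\hat\A$ (and discarding nothing else), so reading out $\Lambda_\A$'s output and performing any fixed measurement yields at best a post-processing of $\hat\A$; pulling back through the isometry $V$, one gets a post-processing of $\A$. Combining, $\B$ is a post-processing of $\A$, i.e. $\B\preccurlyeq\A$.

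The main obstacle is this last step: showing that compatibility of $\B$ with the specific channel $\Lambda_\A$ forces $\B\preccurlyeq\A$. This requires unpacking the Naimark-dilation structure of $\Lambda_\A$ and arguing that the PVM $\hat\A$ is, in a suitable sense, the "finest" observable that can be measured jointly with $\Lambda_\A$ — so any instrument realizing $\Lambda_\A$ as its channel has an associated observable that factors through $\hat\A$ via a stochastic map, and hence through $\A=V^*\hat\A(\cdot)V$ after compressing by $V$. Handling the general (non-extremal) instruments that could realize $\Lambda_\A$, not just the canonical one, is the delicate part; this is exactly where the cited argument of \cite{HeMi13} does its real work, and I would invoke or adapt that structural analysis rather than redo it from scratch.
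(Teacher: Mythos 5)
The paper itself offers no proof of this theorem: it is quoted from \cite{HeMi13} (as the attribution in the theorem heading indicates), so there is no in-paper argument to compare against, and your proposal must be judged as a self-contained proof. As such it has a genuine gap: it only establishes the easy implication of part (b). The direction $\B\preccurlyeq\A\Rightarrow\csigma(\A)\subseteq\csigma(\B)$ via the post-processed instrument $\Ji_{\omega'}=\sum_\omega\mu(\omega',\omega)\,\Ii_\omega$ is correct (and is the same bookkeeping the paper later uses in Proposition \ref{lem:order_Galois}). But part (a) --- which carries essentially all the content, namely that \emph{every} channel compatible with $\A$ is a post-processing of the specific channel $\Lambda_\A$ of \eqref{eq:least_dist} --- is never argued; you only use it. And your converse direction of (b) bottoms out in the claim that $\B\comp\Lambda_\A$ forces $\B\preccurlyeq\A$ (i.e., \eqref{eq:ldc-1}), which you yourself describe as ``where the cited argument does its real work'' and propose to ``invoke or adapt''. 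Note that this claim does not follow from (a) alone: applying (a) to $\B$ only gives $\Lambda_\A\preccurlyeq\Lambda_\B$, and passing from that to $\B\preccurlyeq\A$ is precisely the nontrivial half of \eqref{eq:c_thm_qnd}, which the paper obtains by combining (a) \emph{and} (b) --- so leaning on it while proving (b) is circular unless it is proved independently. The deferred step is therefore not a technical loose end but the actual substance of Theorems 1 and 2 of \cite{HeMi13}: one must run the Stinespring/Radon--Nikodym-type structural analysis, relative to a (minimal) Naimark dilation, of all instruments whose associated observable is $\A$ (for (a)) and of all observables realizable jointly with $\Lambda_\A$ (for the converse of (b)). None of that analysis appears in the proposal.

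A smaller slip: you unpack $\Lambda_\A\preccurlyeq\Lambda_\B$ as ``there is a channel $\Theta$ with $\Lambda_\B=\Theta\circ\Lambda_\A$'', which is backwards with respect to the paper's convention ($\Lambda\preccurlyeq\Gamma$ means $\Lambda=\Theta\circ\Gamma$); the relation you wrote asserts $\Lambda_\B\preccurlyeq\Lambda_\A$, which is generally false in this situation. It does not derail your subsequent reasoning, since you then argue directly from $\B\comp\Lambda_\A$, but it should be corrected.
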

The equivalence class of the channel $\Lambda_\A$ defined in \eqref{eq:least_dist} is the set of all \emph{least disturbing channels} compatible with $\A$. The first part of Theorem~\ref{thm:qnd} can be rephrased by saying that $\csigma(\A)$ is a principal ideal, generated by $\Lambda_\A$. Here, an ideal is meant in the order-theoretic sense.
Combining \eqref{eq:a_thm_qnd} and \eqref{eq:b_thm_qnd} we conclude that
\begin{align}\label{eq:c_thm_qnd}
\B \preccurlyeq \A  \quad \Leftrightarrow \quad \Lambda_\A \preccurlyeq \Lambda_\B \, .
\end{align}

Theorem~\ref{thm:qnd} is about $\sigma_c$ and hence one can ask if something analogous is true for $\tau_c$.
This is not the case, as one observes by inspecting some examples.
Firstly, for every least disturbing channel $\Lambda_\A$, we have 
\begin{equation}\label{eq:ldc-1}
\ctau(\Lambda_\A) = \{\B\in\obs\mid\B \preccurlyeq \A\}
\end{equation}
and, in particular, $\ctau(\Lambda_\A)$ is a principal ideal. 
Indeed, $\B\comp\Lambda_\A$ means that $\Lambda_\A\in\csigma(\B)$, which is equivalent to $\B \preccurlyeq \A$ by combining \eqref{eq:a_thm_qnd} and \eqref{eq:c_thm_qnd}.
For general $\Lambda\in\chan$, however, $\ctau(\Lambda)$ is not a principal ideal. For instance, let \(\Lambda\) be a completely depolarizing channel, i.e., $\Lambda(T) = \tr{T}\eta$ for some fixed state $\eta$. 
We then have \(\ctau(\Lambda)=\obs\), as for any observable $\A$ we can write the instrument $\Ii_\omega(T) = \tr{T\A(\omega)} \eta$ that shows the compatibility of $\A$ and $\Lambda$. Since the set \(\obs\) has inequivalent post-processing maximal elements, \(\csigma(\Lambda)\) is not a principal ideal.

%%%%%%%%%%%%%%%%%%%%%%%%
\subsection{Simulability}\label{sec:simulability}
%%%%%%%%%%%%%%%%%%%%%%%

The post-processing relation on observables generalizes to a preorder on the respective power set $2^\obs$, as discussed and used in various ways in \cite{OsGuWiAc17,GuBaCuAc17,FiHeLe18}. 
Namely, suppose $X,X'\subseteq\obs$ are two arbitrary subsets. 
We say that $X'$ is {\em simulable} by $X$ and write $X'\preccurlyeq X$ if for all $\A'\in X'$ there exist $\A_1,\ldots,\A_n\in X$ such that
\begin{equation}\label{eq:sim_obs}
\A' = \sum_i t_i \ \mu_i \circ \A_i 
\end{equation}
for some stochastic matrices $\mu_1,\ldots,\mu_n$ and real numbers $t_1,\ldots,t_n\in [0,1]$ satisfying $\sum_i t_i = 1$. In particular, for singleton sets $\{\A'\}$ and $\{\A\}$, the simulability relation coincides with the post-processing preorder defined earlier, as we have $\{\A'\} \preccurlyeq \{\A\}$ $\Leftrightarrow$ $\A'\preccurlyeq\A$.

Clearly, $X'\subseteq X$ implies $X'\preccurlyeq X$. However, in contrast to the set inclusion relation, the simulability relation is not antisymmetric, hence it constitutes only a preorder on the power set $2^\obs$. 
Also in this case, to get a partial order we need to consider the quotient set $2^\obs / \simeq$ with respect to the equivalence relation $X'\simeq X$ $\Leftrightarrow$ $X'\preccurlyeq X$ and $X\preccurlyeq X'$.

As in \cite{FiHeLe18}, we further introduce the set
$$
\simu_\obs(X) = \{\A\in\obs\mid \{ \A \} \preccurlyeq X\} 
$$
which is the largest subset of $\obs$ that is simulable by $X$. 
As shown in \cite{FiHeLe18},  $\simu_\obs(X)$ is a convex set containing $X$ and $\simu_\obs(\simu_\obs(X)) = \simu_\obs(X)$. 
We also use the shorthand notation $\simu_\obs(\A) \equiv \simu_\obs(\{\A\})$.

We can define simulability for two subsets $Y,Y'\subseteq\chan$ in an analogous way: in \eqref{eq:sim_obs}, it suffices to replace the observables $\A',\A_1,\ldots,\A_n$ with channels $\Lambda'\in Y'$ and $\Lambda_1,\ldots,\Lambda_n\in Y$, and  stochastic matrices $\mu_1,\ldots,\mu_n$ with channels $\Theta_1,\ldots,\Theta_n$.
The definition and properties of $\simu_\chan(Y)$ are similar to $\simu_\obs(X)$.

For the later developments, we record the trivial observation that the statements of Theorem~\ref{thm:qnd}  can be rephrased as
\begin{gather}
\csigma(\A) = \simu_\chan(\Lambda_\A)
\label{eq:thm_qnd_sing_a}\\
\csigma(\A) \subseteq \csigma(\B) \quad \Leftrightarrow \quad \simu_\obs(\B) \subseteq \simu_\obs(\A) \, .\label{eq:thm_qnd_sing_b}
\end{gather}
Finally, \eqref{eq:ldc-1} takes the form
\begin{equation}\label{eq:ldc-2}
\ctau(\Lambda_\A) = \simu_\obs(\A) \, .
\end{equation}
In the following section, we will see how the maps $\csigma$ and $\ctau$ can be naturally generalized and how their properties connect to the simulation maps.

%%%%%%%%%%
\section{Galois connections and compatibility}\label{sec:galois}
%%%%%%%%%%%

%%%%%%%%%%
\subsection{General definition of a Galois connection}
%%%%%%%%%%%

In the following, we first recall the basic definitions of Galois connections and closure maps \cite{Ore44,Everett44}.

Let $\mathcal{A}$ and $\mathcal{B}$ be two sets. 
A \emph{Galois connection} between $\mathcal{A}$ and $\mathcal{B}$ is a pair of maps $\sigma:2^{\mathcal{A}} \to 2^{\mathcal{B}}$ and $\tau:2^{\mathcal{B}} \to 2^{\mathcal{A}}$, satisfying the following relations:
\begin{equation}\label{eq:gc1}\tag{GC1}
\begin{gathered}
X' \subseteq X \quad \Rightarrow \quad \sigma(X') \supseteq \sigma(X) \qquad \text{for all $X,X'\subseteq\mathcal{A}$\,,} \\
Y' \subseteq Y \quad \Rightarrow \quad \tau(Y') \supseteq \tau(Y) \qquad \text{for all $Y,Y'\subseteq\mathcal{B}$}
\end{gathered}
\end{equation}
and
\begin{equation}\label{eq:gc2}\tag{GC2}
\begin{gathered}
X \subseteq \tau\sigma(X) \qquad \text{for all $X\subseteq\mathcal{A}$\,,} \\
Y \subseteq \sigma\tau(Y) \qquad \text{for all $Y\subseteq\mathcal{B}$\,.}
\end{gathered}
\end{equation}

Any relation $R$ between the sets $\mathcal{A}$ and $\mathcal{B}$ (i.e., any subset $R\subseteq \mathcal{A}\times\mathcal{B}$) generates an {\em induced Galois connection}.
Namely, by defining
\begin{equation}\label{eq:gcR}
\begin{aligned}
\sigma_R(X) & = \cap_{a \in X} \{ b\in \mathcal{B} : (a,b)\in R \} \\
\tau_R(Y) & = \cap_{b \in Y} \{ a\in \mathcal{A} : (a,b)\in R \}
\end{aligned}
\end{equation}
we obtain maps $\sigma_R$ and $\tau_R$ that satisfy \eqref{eq:gc1}--\eqref{eq:gc2}. 

We further recall that a map $\mathfrak{c}:2^\mathcal{A} \to 2^\mathcal{A}$ is a \emph{closure map} on a set $\mathcal{A}$ if it satisfies the following conditions:
\begin{gather}
X \subseteq \clo(X) \label{eq:cl1} \tag{CL1} \\
\clo(\clo(X)) = \clo(X) \label{eq:cl2} \tag{CL2} \\
X' \subseteq X \quad \Rightarrow \quad \clo(X') \subseteq \clo(X) \label{eq:cl3} \tag{CL3}
\end{gather}
for all $X,X'\subseteq\mathcal{A}$.
A subset $X\subseteq \mathcal{A}$ is called \emph{$\mathfrak{c}$-closed} if $\mathfrak{c}(X)=X$.

The following result is standard and easy to prove \cite[Thm.~2.3.2]{KoppitzDenecke}.

\begin{proposition}
Let $(\sigma,\tau)$ be a Galois connection between sets $\mathcal{A}$ and $\mathcal{B}$.
Then
\begin{enumerate}[(a)]
\item $\sigma\tau\sigma = \sigma$ and $\tau\sigma\tau = \tau$;
\item $\tau\sigma$ and $\sigma\tau$ are closure maps on $\mathcal{A}$ and $\mathcal{B}$, respectively;
\item the $\tau\sigma$-closed sets are all sets of the form $\tau(Y)$ for some $Y\subseteq\mathcal{B}$, and the $\sigma\tau$-closed sets are all sets of the form $\sigma(X)$ for some $X\subseteq\mathcal{A}$.
\end{enumerate}
\end{proposition}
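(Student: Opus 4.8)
The plan is to prove the three parts in the natural order, each building on the previous, using only the defining axioms \eqref{eq:gc1}--\eqref{eq:gc2} of a Galois connection and the axioms \eqref{eq:cl1}--\eqref{eq:cl3} of a closure map.

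\textbf{Part (a).} First I would show $\sigma\tau\sigma = \sigma$. Fix $X\subseteq\mathcal{A}$. Applying the first inclusion of \eqref{eq:gc2} to $X$ gives $X\subseteq\tau\sigma(X)$; then the first implication of \eqref{eq:gc1}, with the roles $X'=X$ and $X=\tau\sigma(X)$, yields $\sigma(\tau\sigma(X))\subseteq\sigma(X)$. For the reverse inclusion, apply the second inclusion of \eqref{eq:gc2} to the set $Y=\sigma(X)\subseteq\mathcal{B}$, giving $\sigma(X)\subseteq\sigma\tau(\sigma(X))$. Combining the two inclusions gives $\sigma\tau\sigma(X)=\sigma(X)$ for every $X$, hence $\sigma\tau\sigma=\sigma$. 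The identity $\tau\sigma\tau=\tau$ follows by the symmetric argument, exchanging the roles of $\mathcal{A}$ and $\mathcal{B}$ and of the two halves of \eqref{eq:gc1} and \eqref{eq:gc2}.

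\textbf{Part (b).} I would verify that $\clo:=\tau\sigma$ satisfies \eqref{eq:cl1}--\eqref{eq:cl3} on $\mathcal{A}$. Axiom \eqref{eq:cl1}, namely $X\subseteq\tau\sigma(X)$, is precisely the first line of \eqref{eq:gc2}. Axiom \eqref{eq:cl3} (monotonicity) follows by composing the two implications in \eqref{eq:gc1}: if $X'\subseteq X$ then $\sigma(X')\supseteq\sigma(X)$, and applying $\tau$ and the second implication of \eqref{eq:gc1} reverses the inclusion again to give $\tau\sigma(X')\subseteq\tau\sigma(X)$. For idempotency \eqref{eq:cl2}, I would apply $\tau$ to the identity $\sigma\tau\sigma=\sigma$ from part (a): $\tau\sigma\tau\sigma = \tau\sigma$, i.e. $\clo\circ\clo=\clo$. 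The claim that $\sigma\tau$ is a closure map on $\mathcal{B}$ is again the mirror-image statement.

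\textbf{Part (c).} For the $\tau\sigma$-closed sets: if $X=\tau\sigma(X)$ then trivially $X=\tau(Y)$ with $Y=\sigma(X)\subseteq\mathcal{B}$. Conversely, if $X=\tau(Y)$ for some $Y$, then $\tau\sigma(X)=\tau\sigma\tau(Y)=\tau(Y)=X$ by the identity $\tau\sigma\tau=\tau$ from part (a), so $X$ is $\tau\sigma$-closed. The statement for $\sigma\tau$-closed sets is symmetric. The argument is essentially a bookkeeping exercise; the one place to be slightly careful is keeping track of which of the two (dual) halves of \eqref{eq:gc1} and \eqref{eq:gc2} is being invoked at each step, since $\sigma$ and $\tau$ act in opposite directions. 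There is no substantive obstacle here — the proof is entirely formal — so the main ``difficulty'' is simply organizing the duality cleanly so that the $\mathcal{B}$-side statements can be asserted as immediate mirror images of the $\mathcal{A}$-side ones rather than rewritten in full.
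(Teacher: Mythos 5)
Your proof is correct and complete; all three parts are the standard argument, with the inclusions from \eqref{eq:gc1}--\eqref{eq:gc2} combined exactly as needed (in particular the two-sided inclusion giving $\sigma\tau\sigma=\sigma$, idempotency of $\tau\sigma$ by applying $\tau$ to that identity, and part (c) from $\tau\sigma\tau=\tau$). The paper itself does not prove the proposition but only cites it as standard (Koppitz--Denecke, Thm.~2.3.2), and your argument is precisely that standard proof, so there is no substantive difference to report.
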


We say that $\tau\sigma$ and $\sigma\tau$ are the closure maps {\em associated} with the Galois connection $(\sigma,\tau)$.

%%%%%%%%%%
\subsection{Galois connection induced by the compatibility relation}
%%%%%%%%%%%

In the rest of this paper, we are going to investigate the Galois connection induced by the compatibility relation between channels and observabels. 
To do it, we extend the definition of $\csigma$ and $\ctau$ given in \eqref{eq:def_sigma_tau_singleton} from singleton sets to arbitrary subsets $X\subseteq\obs$ and $Y\subseteq\chan$ as follows:
\begin{align*}
\csigma(X) &= \{ \Lambda \in \chan \mid \text{$\A\comp\Lambda$ for every $\A\in X$} \} \, , \\
\ctau(Y) &= \{ \A \in \obs \mid \text{$\A\comp\Lambda$ for every $\Lambda\in Y$} \} \,.
\end{align*}
These maps are then exactly the Galois connection induced by the compatibility relation as done in \eqref{eq:gcR}. 
Therefore, all the previously mentioned general results are valid for $\csigma$ and $\ctau$.
Especially, $\ctau\csigma$ and $\csigma\ctau$ are closure maps.

Our first observation is that the sets $\csigma(X)$ and $\ctau(Y)$ are order-theoretic ideals, as stated in the following simple but useful result.

\begin{proposition}\label{lem:order_Galois}
For all $X,X'\subseteq\obs$ and $Y,Y'\subseteq\chan$ the following implications hold:
\begin{enumerate}[(a)]
\item If $X'\preccurlyeq X \subseteq \ctau(Y)$, then $X'\subseteq \ctau(Y)$. In particular, $\simu_\obs(\ctau(Y)) = \ctau(Y)$.\label{it:a_lem_order_Galois}
\item If $Y'\preccurlyeq Y \subseteq \csigma(X)$, then $Y'\subseteq \csigma(X)$. In particular, $\simu_\chan(\csigma(X)) = \csigma(X)$.\label{it:b_lem_order_Galois}
\end{enumerate}
\end{proposition}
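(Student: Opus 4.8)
The plan is to prove part \eqref{it:a_lem_order_Galois} in detail; part \eqref{it:b_lem_order_Galois} then follows by an entirely symmetric argument, interchanging the roles of observables and channels (so below I only comment on where the symmetry is not quite verbatim).

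\medskip
First I would unfold the definition: to show $X'\subseteq\ctau(Y)$ it suffices to fix an arbitrary $\A'\in X'$ and an arbitrary $\Lambda\in Y$ and produce an instrument realizing $\A'$ and $\Lambda$ jointly. Since $X'\preccurlyeq X$, there are $\A_1,\dots,\A_n\in X$, stochastic matrices $\mu_i$, and weights $t_i\in[0,1]$ with $\sum_i t_i=1$ such that $\A'=\sum_i t_i\,\mu_i\circ\A_i$. Because $X\subseteq\ctau(Y)$, for each $i$ we have $\A_i\comp\Lambda$, i.e.\ there is an instrument $\Ii^{(i)}=\{\Ii^{(i)}_\omega\}$ with $(\Ii^{(i)})^\chan=\Lambda$ and $(\Ii^{(i)})^\obs=\A_i$. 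The natural candidate instrument for $\A'$ is
\begin{equation*}
\Ji_{\omega'}(T)=\sum_i t_i\sum_\omega \mu_i(\omega',\omega)\,\Ii^{(i)}_\omega(T)\,.
\end{equation*}
I would then check the two defining properties. Summing over $\omega'$ and using $\sum_{\omega'}\mu_i(\omega',\omega)=1$ gives $\Ji^\chan=\sum_i t_i\,(\Ii^{(i)})^\chan=\sum_i t_i\Lambda=\Lambda$, so the associated channel is still $\Lambda$; this is the key point where we need \emph{every} $\Ii^{(i)}$ to have the \emph{same} associated channel $\Lambda$. For the associated observable, for all $T$ we have $\tr{\Ji_{\omega'}(T)}=\sum_i t_i\sum_\omega\mu_i(\omega',\omega)\tr{\Ii^{(i)}_\omega(T)}=\sum_i t_i\sum_\omega\mu_i(\omega',\omega)\tr{T\A_i(\omega)}=\tr{T\,\A'(\omega')}$, so $\Ji^\obs=\A'$. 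Finally each $\Ji_{\omega'}$ is CP as a nonnegative combination of CP maps, so $\Ji$ is a bona fide instrument and $\A'\comp\Lambda$. Since $\Lambda\in Y$ was arbitrary, $\A'\in\ctau(Y)$, and since $\A'\in X'$ was arbitrary, $X'\subseteq\ctau(Y)$.

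\medskip
For the ``in particular'' clause: taking $X=X'=\ctau(Y)$ in what was just proved, the definition of $\simu_\obs$ gives $\simu_\obs(\ctau(Y))=\{\A\mid\{\A\}\preccurlyeq\ctau(Y)\}\subseteq\ctau(Y)$, while the reverse inclusion $\ctau(Y)\subseteq\simu_\obs(\ctau(Y))$ is immediate since $\simu_\obs(Z)\supseteq Z$ always (a set simulates itself via the identity post-processing). Hence $\simu_\obs(\ctau(Y))=\ctau(Y)$.

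\medskip
I do not anticipate a serious obstacle: the statement is essentially the assertion that $\ctau(Y)$ is closed under the convex-plus-post-processing operation defining simulability, and the single nontrivial ingredient is writing down the combined instrument above and verifying it has the prescribed associated channel. The only thing to be slightly careful about is that simulability of singletons by a set is defined with a \emph{single} decomposition $\A'=\sum_i t_i\mu_i\circ\A_i$ with all $\A_i$ drawn from $X$, and that this is exactly what the construction consumes; transitivity-type subtleties of $\preccurlyeq$ on the power set are not needed here. For part \eqref{it:b_lem_order_Galois} the same template applies with channels $\Theta_i$ in place of stochastic matrices $\mu_i$: given instruments $\Ii^{(i)}$ with $(\Ii^{(i)})^\obs=\A$ (fixed) and $(\Ii^{(i)})^\chan=\Lambda_i$, one forms $\Ji_\omega=\sum_i t_i\,\Theta_i\circ\Ii^{(i)}_\omega$, whose associated channel is $\sum_i t_i\,\Theta_i\circ\Lambda_i=\Lambda'$ and whose associated observable is still $\A$ because each $\Theta_i$ is trace-preserving; the rest is identical.
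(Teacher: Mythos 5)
Your proof is correct and is essentially the paper's own argument: the same combined instrument $\Ji_{\omega'}=\sum_i t_i\sum_\omega\mu_i(\omega',\omega)\,\Ii^{(i)}_\omega$ (the paper writes it as $\sum_i t_i\,\mu_i\circ\Ii_i$ after first recording the identities $(\mu\circ\Ii)^\chan=\Ii^\chan$, $(\mu\circ\Ii)^\obs=\mu\circ\Ii^\obs$ and their channel analogues), with the same verification that the associated channel stays $\Lambda$ and the associated observable becomes $\A'$, and the same symmetric treatment of part (b) via $\sum_i t_i\,\Theta_i\circ\Ii^{(i)}_\omega$. Only a cosmetic remark: in the ``in particular'' step the correct specialization is $X'=\simu_\obs(\ctau(Y))$ and $X=\ctau(Y)$ (not $X=X'$), which is plainly what you intend, since $\simu_\obs(\ctau(Y))\preccurlyeq\ctau(Y)$ by definition.
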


\begin{proof}
If $\Ii$ is an instrument and $\Theta$ is a channel with matching output and input spaces, we can define the new instrument $\Theta\circ\Ii$ given by $(\Theta\circ\Ii)_\omega = \Theta\circ \Ii_\omega$. Similarly, if $\mu$ is a stochastic matrix, we can define the instrument $\mu\circ\Ii$ as $(\mu\circ\Ii)_{\omega'} = \sum_\omega \mu(\omega',\omega) \Ii_\omega$. It is easy to check that
\begin{align*}
(\Theta\circ\Ii)^\chan & = \Theta\circ\Ii^\chan\,, & (\Theta\circ\Ii)^\obs & = \Ii^\obs\,, \\
(\mu\circ\Ii)^\chan & = \Ii^\chan\,, & (\mu\circ\Ii)^\obs & = \mu\circ\Ii^\obs \,.
\end{align*}
We use the two relations in the second row to prove \eqref{it:a_prop_order_Galois}. 
The proof of \eqref{it:b_prop_order_Galois} is similar.

Assume $X'\preccurlyeq X \subseteq \ctau(Y)$, and let $\A'\in X'$. 
Then $\A'$ can be expressed as in \eqref{eq:sim_obs} for some choice of $\A_1,\ldots\A_n\in X$, stochastic matrices $\mu_1,\ldots,\mu_n$ and real numbers $t_1,\ldots,t_n\in [0,1]$ satisfying $\sum_i t_i = 1$. 
For any $\Lambda\in Y$, fix instruments $\Ii_1,\ldots\Ii_n$ such that $\Ii_i^\chan = \Lambda$ and $\Ii_i^\obs = \A_i$ for all $i$; moreover, let $\Ii' = \sum_i t_i \mu_i\circ\Ii_i$. 
Then $\Ii^{\prime\,\chan} = \Lambda$ and $\Ii^{\prime\,\obs} = \A'$. 
We thus conclude that $\A'\in \ctau(Y)$, hence $X'\subseteq \ctau(Y)$. In particular, by choosing $X=\ctau(Y)$ and $X'=\simu_\obs(\ctau(Y))$, we find the inclusion $\simu_\obs(\ctau(Y))\subseteq\ctau(Y)$. The reverse inclusion is trivial, and therefore $\simu_\obs(\ctau(Y)) = \ctau(Y)$.
\end{proof}

A first consequence of Proposition \ref{lem:order_Galois} is that conditions \eqref{eq:gc1}-\eqref{eq:gc2} hold for the maps $\csigma$ and $\ctau$ also if we replace the partial order $\subseteq$ with the simulability preorder $\preccurlyeq$. 
Indeed, we even have a bit stronger fact, as shown by the next result.

\begin{proposition}\label{prop:order_Galois}
For all $X,X'\subseteq\obs$ and $Y,Y'\subseteq\chan$ the following implications hold:
\begin{enumerate}[(a)]
\item If $X'\preccurlyeq X$, then $\csigma(X')\supseteq\csigma(X)$. In particular, $\csigma(X) = \csigma(\simu_\obs(X))$. \label{it:a_prop_order_Galois}
\item If $Y'\preccurlyeq Y$, then $\ctau(Y')\supseteq\ctau(Y)$. In particular, $\ctau(Y) = \ctau(\simu_\chan(Y))$.\label{it:b_prop_order_Galois}
\end{enumerate}
\end{proposition}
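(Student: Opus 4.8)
The plan is to prove part (a) directly from the definition of $\csigma$ together with the simulability relation, and then note that part (b) follows by the same argument with the roles of observables and channels interchanged. The key point is that compatibility is preserved under the operations that build simulable observables, namely post-processing by stochastic matrices and taking convex mixtures; this is essentially what was already shown in the proof of Proposition~\ref{lem:order_Galois}, so the present proposition is a mild repackaging of that computation.

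First I would fix $X'\preccurlyeq X$ and a channel $\Lambda\in\csigma(X)$, and show $\Lambda\in\csigma(X')$. Take any $\A'\in X'$; by the definition of simulability in \eqref{eq:sim_obs}, there are $\A_1,\ldots,\A_n\in X$, stochastic matrices $\mu_1,\ldots,\mu_n$ and weights $t_1,\ldots,t_n\in[0,1]$ with $\sum_i t_i=1$ such that $\A'=\sum_i t_i\,\mu_i\circ\A_i$. Since $\Lambda\in\csigma(X)$ and each $\A_i\in X$, we have $\A_i\comp\Lambda$, so we may pick instruments $\Ii_i$ with $\Ii_i^\chan=\Lambda$ and $\Ii_i^\obs=\A_i$. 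Setting $\Ii'=\sum_i t_i\,\mu_i\circ\Ii_i$ and using the identities $(\mu\circ\Ii)^\chan=\Ii^\chan$, $(\mu\circ\Ii)^\obs=\mu\circ\Ii^\obs$ from the proof of Proposition~\ref{lem:order_Galois} (together with linearity of the associated-channel and associated-observable maps in the obvious sense), one checks that $\Ii'$ is an instrument with $\Ii^{\prime\,\chan}=\sum_i t_i\Lambda=\Lambda$ and $\Ii^{\prime\,\obs}=\sum_i t_i\,\mu_i\circ\A_i=\A'$. Hence $\A'\comp\Lambda$. As $\A'\in X'$ was arbitrary, $\Lambda\in\csigma(X')$, proving $\csigma(X')\supseteq\csigma(X)$.

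For the ``in particular'' clause, recall from Section~\ref{sec:simulability} that $X\subseteq\simu_\obs(X)$ and $\simu_\obs(X)\preccurlyeq X$ (indeed $\simu_\obs(X)$ is by definition the set of all observables simulable by $X$). Applying the monotonicity just proved to the inclusion $X\subseteq\simu_\obs(X)$ gives $\csigma(\simu_\obs(X))\subseteq\csigma(X)$, while applying it to $\simu_\obs(X)\preccurlyeq X$ gives $\csigma(\simu_\obs(X))\supseteq\csigma(X)$; together these yield $\csigma(X)=\csigma(\simu_\obs(X))$. Part (b) is obtained verbatim after swapping $\obs\leftrightarrow\chan$, $\csigma\leftrightarrow\ctau$, stochastic matrices $\leftrightarrow$ channels, and using the first row of identities $(\Theta\circ\Ii)^\chan=\Theta\circ\Ii^\chan$, $(\Theta\circ\Ii)^\obs=\Ii^\obs$ from the proof of Proposition~\ref{lem:order_Galois}.

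I do not expect a genuine obstacle here: the only thing to be slightly careful about is that a convex combination of instruments with a common associated channel again has that same associated channel (the weights sum to one), and that composing on the output side with a stochastic matrix leaves the associated channel untouched while post-processing the associated observable — both facts are already recorded in the proof of Proposition~\ref{lem:order_Galois}. So the ``hard part'', such as it is, is purely bookkeeping: making sure the instrument $\Ii'=\sum_i t_i\,\mu_i\circ\Ii_i$ is well-defined (one may first embed all the $\Ii_i$ so they share a common outcome set and output space, or simply allow direct sums) and that its associated observable and channel come out as claimed.
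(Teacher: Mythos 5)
Your proof is correct, but it takes a different (more hands-on) route than the paper. The paper's own proof of this proposition is purely order-theoretic: from $X\subseteq\ctau\csigma(X)$ (condition \eqref{eq:gc2}) and Proposition \ref{lem:order_Galois}\eqref{it:a_lem_order_Galois} applied with $Y=\csigma(X)$ it deduces $X'\subseteq\ctau\csigma(X)$, and then uses antitonicity \eqref{eq:gc1} together with the identity $\csigma\ctau\csigma=\csigma$ to get $\csigma(X')\supseteq\csigma\ctau\csigma(X)=\csigma(X)$ --- no new instrument is constructed. You instead fix $\Lambda\in\csigma(X)$ and re-run the explicit construction $\Ii'=\sum_i t_i\,\mu_i\circ\Ii_i$, which is exactly the computation already carried out in the proof of Proposition \ref{lem:order_Galois}, specialized to the singleton set $Y=\{\Lambda\}$; since $\Lambda\in\csigma(X)$ is the same as $X\subseteq\ctau(\{\Lambda\})$, you could have invoked Proposition \ref{lem:order_Galois}\eqref{it:a_lem_order_Galois} verbatim instead of repeating its proof. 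Your version is self-contained and keeps the physical mechanism (mixing and post-processing of instruments) in view, at the cost of duplicating earlier work; the paper's version buys brevity and makes clear that the statement is formally forced by \eqref{eq:gc1}--\eqref{eq:gc2}, Proposition \ref{lem:order_Galois} and the general Galois-connection identities. Two minor remarks: the embedding worry you raise is moot, because all the $\Ii_i$ automatically share the output space of $\Lambda$ and the post-processings $\mu_i$ already bring them to the common outcome set of $\A'$; and your treatment of the ``in particular'' clause (one inclusion from $X\subseteq\simu_\obs(X)$, the other from $\simu_\obs(X)\preccurlyeq X$) is equivalent to the paper's, which applies the monotonicity statement in both directions using $X\preccurlyeq\simu_\obs(X)$ and $\simu_\obs(X)\preccurlyeq X$.
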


\begin{proof}
Suppose $X'\preccurlyeq X$. 
Since $X\subseteq\ctau\csigma(X)$, we have $X'\subseteq\ctau\csigma(X)$ by Proposition \ref{lem:order_Galois}. 
Then, $\csigma(X')\supseteq\csigma\ctau\csigma(X) = \csigma(X)$, as claimed in \eqref{it:a_prop_order_Galois}. 
In the particular case $X'=\simu_\obs(X)$, we have both $X'\preccurlyeq X$ and $X\preccurlyeq X'$, hence the equality $\csigma(X) = \csigma(\simu_\obs(X))$ holds. 
The proof of \eqref{it:b_prop_order_Galois} is similar.
\end{proof}

Next, we study the interplay between compatibility closure maps and simulability. 
To this aim, we recall that also $\simu_\obs$ is a closure map \cite{FiHeLe18}, and it is easy to observe that the same is true for $\simu_\chan$. 
As a consequence of Propositions \ref{lem:order_Galois} and \ref{prop:order_Galois}, we see that these closure maps have the following relation with the closure maps associated with the Galois connection $(\csigma,\ctau)$.

\begin{proposition}\label{prop:simu-Galois}
We have 
\begin{equation}\label{eq:sg-1}
\simu_\obs(X)\subseteq\ctau\csigma(X) = \ctau\csigma(\simu_\obs(X))
\end{equation}
and
\begin{equation}\label{eq:sg-2}
\simu_\chan(Y)\subseteq\csigma\ctau(Y) = \csigma\ctau(\simu_\chan(Y))
\end{equation}
for all $X\subseteq\obs$ and $Y\subseteq\chan$.
\end{proposition}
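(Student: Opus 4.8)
The statement consists of two symmetric halves, so I will describe the argument for \eqref{eq:sg-1}; the proof of \eqref{eq:sg-2} is obtained by exchanging the roles of observables and channels and using the corresponding halves of Propositions~\ref{lem:order_Galois} and \ref{prop:order_Galois}. The target is the chain $\simu_\obs(X)\subseteq\ctau\csigma(X) = \ctau\csigma(\simu_\obs(X))$, which naturally splits into an inclusion and an equality.

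First I would establish the inclusion $\simu_\obs(X)\subseteq\ctau\csigma(X)$. By \eqref{eq:gc2} (which holds for the induced Galois connection $(\csigma,\ctau)$, as noted after the extended definitions) we have $X\subseteq\ctau\csigma(X)$. Now invoke Proposition~\ref{lem:order_Galois}\eqref{it:a_lem_order_Galois} with the set $\ctau\csigma(X)$ in place of $\ctau(Y)$: it says precisely that $\ctau\csigma(X)$ is closed under simulability, i.e.\ $\simu_\obs(\ctau\csigma(X)) = \ctau\csigma(X)$. Since $\simu_\obs$ is a closure map (hence monotone, \eqref{eq:cl3}), applying $\simu_\obs$ to $X\subseteq\ctau\csigma(X)$ and using this fixed-point property gives $\simu_\obs(X)\subseteq\simu_\obs(\ctau\csigma(X)) = \ctau\csigma(X)$, which is the desired inclusion.

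Next I would prove the equality $\ctau\csigma(X) = \ctau\csigma(\simu_\obs(X))$. This follows immediately from Proposition~\ref{prop:order_Galois}\eqref{it:a_prop_order_Galois}, which states $\csigma(X) = \csigma(\simu_\obs(X))$; applying $\ctau$ to both sides yields the claim. Alternatively, one can bypass Proposition~\ref{prop:order_Galois} and argue directly: $X\subseteq\simu_\obs(X)$ gives $\ctau\csigma(X)\subseteq\ctau\csigma(\simu_\obs(X))$ by monotonicity of the closure map $\ctau\csigma$ \eqref{eq:cl3}, while $\simu_\obs(X)\subseteq\ctau\csigma(X)$ (just proved) together with idempotency and monotonicity of $\ctau\csigma$ gives $\ctau\csigma(\simu_\obs(X))\subseteq\ctau\csigma(\ctau\csigma(X)) = \ctau\csigma(X)$, closing the loop.

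I do not anticipate a genuine obstacle here: the proposition is a formal consequence of the two preceding propositions plus the abstract closure-map axioms, and the only subtlety is bookkeeping — making sure to apply Proposition~\ref{lem:order_Galois} to the already-closed set $\ctau\csigma(X)$ rather than to a bare $\ctau(Y)$, and citing the correct halves (a) versus (b) for each of the two statements. The mild care needed is that $\simu_\obs$ and $\ctau\csigma$ are \emph{a priori} different closure operators; the content of \eqref{eq:sg-1} is exactly that the former is dominated by the latter and that the latter absorbs the former, which is what the combination of Propositions~\ref{lem:order_Galois} and \ref{prop:order_Galois} delivers.
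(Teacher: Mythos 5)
Your proof is correct and follows essentially the same route as the paper: both parts reduce to Proposition~\ref{lem:order_Galois}\eqref{it:a_lem_order_Galois} (you use its ``in particular'' fixed-point form $\simu_\obs(\ctau\csigma(X))=\ctau\csigma(X)$ together with monotonicity of $\simu_\obs$, while the paper applies the implication directly to $\simu_\obs(X)\preccurlyeq X\subseteq\ctau\csigma(X)$) and to Proposition~\ref{prop:order_Galois}\eqref{it:a_prop_order_Galois} for the equality, exactly as in the paper. The symmetric treatment of \eqref{eq:sg-2} also matches the paper's ``the proof is similar'' remark.
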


\begin{proof}
We prove only \eqref{eq:sg-1}, the proof of \eqref{eq:sg-2} being similar.
Since $\simu_\obs(X)\preccurlyeq X \subseteq \ctau\csigma(X)$, the inclusion $\simu_\obs(X)\subseteq \ctau\csigma(X)$ follows from Proposition \ref{lem:order_Galois}\eqref{it:a_lem_order_Galois}. 
On the other hand, the equality $\ctau\csigma(X) = \ctau\csigma(\simu_\obs(X))$ is a consequence of Proposition \ref{prop:order_Galois}\eqref{it:a_prop_order_Galois}.
\end{proof}

%%%%%%%%%%
\section{Leak of information}\label{sec:leak}
%%%%%%%%%%%

In the previous section, we observed that the simulation closure map $\simu_\obs$ is related to the closure map $\tau_c \sigma_c$. Here, we describe the operational meaning of the latter closure map.

Let $\Lambda: \mathcal{L}(\mathcal{H}) \to \mathcal{L}( \mathcal{K})$ be a channel. 
Then one can construct a quartet $(\mathcal{V}_1, \mathcal{V}_2, U, | \eta\rangle)$, where $\mathcal{V}_1$ and $\mathcal{V}_2$ 
are Hilbert spaces, 
$U$ is a unitary operator from $\hh \otimes \mathcal{V}_1$
to $\kk \otimes \mathcal{V}_2$ and 
 $\ket{\eta}$ is a normalized vector of $\mathcal{V}_1$, in a way that 
\begin{eqnarray}\label{eq:dilation}
\Lambda(T) = {\rm tr}_{\mathcal{V}_2} [U (T\otimes \kb{\eta}{\eta}) U^*]
\label{realization}
\end{eqnarray}
for any $T\in\lh$ (see e.g.~\cite{Ozawa84,HeMiRe14}). In the last formula, ${\rm tr}_{\mathcal{V}_2} : \mathcal{L}(\kk\otimes\mathcal{V}_2) \to \lk$ denotes the partial trace over the 
$\mathcal{V}_2$-system.
This quartet can be interpreted as a physical realization of the channel $\Lambda$. Indeed, we see from \eqref{eq:dilation} that $\Lambda$ is implemented by introducing an auxiliary $\mathcal{V}_1$-system (apparatus) prepared in the initial state $\kb{\eta}{\eta}$, then making the system and the apparatus interact by means of the unitary evolution $U$, and finally discarding the $\mathcal{V}_2$-subsystem from the resulting compound state.

For each realization of the channel $\Lambda$, an observable 
on $\mathcal{V}_2$ defines a measurement process.  
More precisely, an observable $\F:\Omega\to\mathcal{L}(\mathcal{V}_2)$ defines an instrument $\Ii = \{\Ii_{\omega} : \lh\to\lk \mid \omega\in\Omega\}$ by setting
\begin{eqnarray*}
\Ii_{\omega}(T) = {\rm tr}_{\mathcal{V}_2}
 [U (T\otimes \kb{\eta}{\eta}) U^* (\id\otimes \F(\omega))]\,.
\end{eqnarray*}
Such an observable $\F$ is called a {\em pointer observable}; we measure it on the apparatus after the interaction in order to extract information on the system.

The instrument $\Ii$ describes a measurement of the observable
$$
\A(\omega) = V^* (\id\otimes \F(\omega)) V
$$
on the $\hh$-system, where $V\ket{\psi} = U \ket{\psi}\otimes\ket{\eta}$; therefore, we may call $\A$ the observable \emph{induced} by $\F$. By the very definition, this induced observable and the channel $\Lambda$ are compatible. 

Furthermore, according to Radon-Nikodym theorem \cite{Arveson69,Raginsky03}, one can find that \emph{the set of all the observables compatible with 
$\Lambda$ (i.e., $\ctau(\Lambda)$) 
coincides with the set of all the induced observables 
obtained by all the possible choices of pointer observables}.  
Clearly, this set does not depend on the realization $(\mathcal{V}_1, \mathcal{V}_2, U, | \eta\rangle)$.

Now,  suppose that we have a realization of a channel $\Lambda$ which is compatible with an observable $\A$. Then surely $\A \in \ctau(\Lambda)$ holds. 
Now the question is if there is some other induced observable which can be obtained
by choosing a different pointer observable for any $\Lambda$ compatible with $\A$.
This subset of observables, which we call \emph{leak of information} for $\A$, 
is represented by $\ctau \csigma (\A)$.
It is hence given by one of the closure maps discussed in Section \ref{sec:galois}.

We can generalize this notion to a subset $X$ of observables. 
The question is then: what is the set of observables each of which can be measured by suitably choosing a pointer observable for any $\Lambda$ compatible with every $\A \in X$? 
The seeked set is clearly equal to $\ctau\csigma(X)$.
Motivated by this physical interpretation, we denote
\begin{equation*}
\leak = \ctau\csigma
\end{equation*}
and call this map the \emph{leak closure}.
We observe that Proposition \ref{prop:simu-Galois} implies the inclusion $\simu_\obs(X) \subseteq \leak(X)$ for all $X\subseteq\obs$.

Although $\leak(X)$ for general $X\subset\obs$ can be difficult to be determined, for certain sets it has a neat form. This is the content of the next result. 

\begin{theorem}\label{thm:leak-sim}
Let $X\subset\obs$ be a set having a greatest element. That is, there exists an element $\A\in X$ such that $\B \preccurlyeq \A$ holds for any $\B\in X$. 
Then
\begin{eqnarray*}
\leak(X) = \simu_\obs(\A) \,.
\end{eqnarray*}
\end{theorem}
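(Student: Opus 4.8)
The plan is to reduce the statement to the singleton case, for which the key formulas \eqref{eq:thm_qnd_sing_a} and \eqref{eq:ldc-2} are already available. First I would observe that since $\A$ is a greatest element of $X$, we have $X \preccurlyeq \{\A\}$ and $\{\A\} \preccurlyeq X$ (the latter because $\A \in X$), so $X \simeq \{\A\}$ as subsets of $\obs$. By Proposition \ref{prop:order_Galois}\eqref{it:a_prop_order_Galois}, the map $\csigma$ depends only on the simulability-equivalence class of its argument, hence $\csigma(X) = \csigma(\A)$. Applying $\ctau$ to both sides gives $\leak(X) = \ctau\csigma(X) = \ctau\csigma(\A) = \leak(\A)$, so it suffices to prove $\leak(\A) = \simu_\obs(\A)$ for a single observable $\A$.

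For the singleton case I would chain the two rephrased parts of Theorem~\ref{thm:qnd}. By \eqref{eq:thm_qnd_sing_a}, $\csigma(\A) = \simu_\chan(\Lambda_\A)$. Now I want to compute $\ctau(\simu_\chan(\Lambda_\A))$. By Proposition \ref{prop:order_Galois}\eqref{it:b_prop_order_Galois}, $\ctau(\simu_\chan(Y)) = \ctau(Y)$ for any $Y \subseteq \chan$; taking $Y = \{\Lambda_\A\}$ yields $\ctau(\simu_\chan(\Lambda_\A)) = \ctau(\Lambda_\A)$. Finally, \eqref{eq:ldc-2} states precisely that $\ctau(\Lambda_\A) = \simu_\obs(\A)$. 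Putting the chain together: $\leak(\A) = \ctau\csigma(\A) = \ctau(\simu_\chan(\Lambda_\A)) = \ctau(\Lambda_\A) = \simu_\obs(\A)$, which is what we wanted.

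Assembling the two reductions completes the argument: $\leak(X) = \leak(\A) = \simu_\obs(\A)$. The proof is essentially a bookkeeping exercise once the right earlier results are identified, so there is no serious obstacle; the only point requiring a moment's care is confirming that $\csigma$ (and hence $\leak = \ctau\csigma$) is invariant under replacing a set by a simulability-equivalent one, which is exactly the content of Proposition \ref{prop:order_Galois}. I would also double-check that a greatest element in the post-processing preorder on $X$ indeed gives $X \simeq \{\A\}$ in the simulability preorder on $2^\obs$ — this is immediate since each $\B \in X$ satisfies $\B \preccurlyeq \A$, so every element of $X$ is of the form \eqref{eq:sim_obs} with a single term $t_1 = 1$, $\A_1 = \A$, while conversely $\A \in X$ trivially gives $\{\A\} \preccurlyeq X$.
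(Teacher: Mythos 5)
Your proposal is correct and follows essentially the same route as the paper's own proof: reduce to the singleton case via Proposition \ref{prop:order_Galois}\eqref{it:a_prop_order_Galois} (so that $\csigma(X)=\csigma(\A)$), then chain \eqref{eq:thm_qnd_sing_a}, Proposition \ref{prop:order_Galois}\eqref{it:b_prop_order_Galois}, and \eqref{eq:ldc-2} to get $\ctau\csigma(\A)=\simu_\obs(\A)$. The only difference is presentational, namely that you phrase the first step as $X\simeq\{\A\}$ while the paper notes $\simu_\obs(X)=\simu_\obs(\A)$, which is the same observation.
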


\begin{proof}
Under the conditions of the theorem, we have $\{\B\}\preccurlyeq X$ $\Leftrightarrow$ $\{\B\}\preccurlyeq\{\A\}$, hence $\simu_\obs(X) = \simu_\obs(\A)$. 
We conclude that
$$
\csigma(X) = \csigma(\A) = \simu_\chan(\Lambda_\A)
$$
where the first equality follows from Proposition \ref{prop:order_Galois}\eqref{it:a_prop_order_Galois} and the second one is \eqref{eq:thm_qnd_sing_a}. 
Combining this with Proposition \ref{prop:order_Galois}\eqref{it:b_prop_order_Galois} and \eqref{eq:ldc-2}, we get
$$
\leak(X)=\ctau(\csigma(X)) = \ctau(\simu_\chan(\Lambda_\A)) = \ctau(\Lambda_\A) =  \simu_\obs(\A) \, ,
$$
which proves the theorem.
\end{proof}

A subset $X\subset\obs$ as in the above theorem can be regarded as a classical set since it admits a most informative observable $\A$. 
We note that the theorem specializes the general inclusion $\simu_\obs(X) \subseteq \leak(X)$ to the equality $\simu_\obs(X)=\leak(X)$ whenever $X = \{\A\}$ is a singleton set. 
In the following two examples, however, we demonstrate that the equality $\simu_\obs(X)=\leak(X)$ does not always hold. 

\begin{example}\label{ex:qubit}
Let us consider $\hh= \Cb^2$.
We fix the three Pauli matrices $\vsigma = (\sigma_1,\sigma_2,\sigma_3)$, and define two sharp qubit observables 
$$
\A(\pm) = \half (\id \pm \sigma_1) \, , \quad \B(\pm) = \half (\id \pm \sigma_2) \, .
$$
For these observables, we claim that
\begin{equation}\label{eq:qubit-ex}
\leak(\{\A, \B\})= \obs \neq \simu_\obs(\{\A, \B\}) \,.
\end{equation}
In order to prove the equality, it suffices to show that 
$\csigma(\{\A, \B\})$ consists of completely depolarizing channels. 
Any $\Lambda \in \csigma(\{\A, \B\})= \csigma(\A) \cap \csigma(\B)$
is written as 
$$
\Lambda(T) = \sum_{\omega} \mbox{tr}[T \A(\omega)]\eta(\omega)
= \sum_{\omega} \mbox{tr}[T \B(\omega)] \xi(\omega)
$$
with some states $\eta(\omega)$ and $\xi(\omega)$ for $\omega = \pm$ 
\cite[Cor~1]{HeWo10}. Putting 
$T= \A(\omega)$ for $\omega=\pm$, we conclude
$$
\eta(+) = \eta(-)= \half(\xi(+) + \xi(-)) \, .
$$
Thus we observe that $\Lambda$ is a completely depolarizing channel. 
Further, the inequality in \eqref{eq:qubit-ex} follows since the set $\simu_\obs(\{\A, \B\})$ is contained in the linear span of the operators $\id$, $\sigma_1$ and $\sigma_2$; hence e.g. the observable $\C(\pm) = (\id \pm \sigma_3)/2$ can not be an element of $\simu_\obs(\{\A, \B\})$.
\end{example}

We recall that two observables $\A$ and $\B$ are called {\em jointly measurable} if there exists a third observable $\G$ such that $\A\preccurlyeq\G$ and $\B\preccurlyeq\G$; otherwise $\A$ and $\B$ are incompatible.
In Example \ref{ex:qubit}, the two observables $\A$ and $\B$ are incompatible.
On the other hand, all the observables in the set $X$ of Theorem \ref{thm:leak-sim} are jointly measurable, as they are post-processings of the greatest element $\A$. One may then wonder if joint measurability is a sufficient condition for the equality $\leak(X) = \simu_\obs(X)$. This is not the case, as the next slightly more elaborate example shows.

\begin{example}
Let us consider $\hh= \Cb^3$. 
We fix an orthonormal basis $\{|n\rangle\}_{n=1,2,3}$ and define an observable $\Eo$ as 
 $\Eo(n)= |n\rangle \langle n|$, $n=1,2,3$. 
We then introduce two other observables $\A$ and $\B$, given as $\A(1)= \Eo(1)$, $\A(2)=\Eo(2)+\Eo(3)$ and $\B(1)=\Eo(1)+\Eo(2)$, $\B(2)=\Eo(3)$. We claim that
\begin{equation}\label{eq:chain_leak}
\leak(\{\A, \B\}) = \simu_\obs (\Eo) \neq \simu_\obs (\{\A, \B\}) \,.
\end{equation}
In order to prove the left equality, first of all we observe that
\begin{equation}\label{eq:csigma(A,B)}
\csigma(\{\A, \B\}) = \simu_\chan(\Lambda_\Eo) \,.
\end{equation}
Indeed, since $\{\A, \B\} \preccurlyeq \{\Eo\}$, we have 
$$
\csigma(\{\A, \B\}) \supseteq \csigma(\Eo) = \simu_\chan(\Lambda_\Eo)
$$
by Proposition \ref{prop:order_Galois}\eqref{it:a_prop_order_Galois} and \eqref{eq:thm_qnd_sing_a}. On the other hand, if $\Lambda\in\csigma(\{\A, \B\}) = \csigma(\A)\cap\csigma(\B)$, then necessarily $\Lambda\preccurlyeq\Lambda_\A$ and $\Lambda\preccurlyeq\Lambda_\B$ by \eqref{eq:a_thm_qnd}. Using the trivial Naimark dilations of the two projection-valued-measures $\A$ and $\B$, formula \eqref{eq:least_dist} yields
\begin{align*}
\Lambda_\A(T) & = \Eo(1)T\Eo(1) + (\Eo(2)+\Eo(3)) T (\Eo(2)+\Eo(3))\,, \\
\Lambda_\B(T) & = (\Eo(1)+\Eo(2))T(\Eo(1)+\Eo(2)) + \Eo(3)T\Eo(3) \,.
\end{align*}
We see that
$$
\Lambda_\A(\kb{1}{i}) = \Lambda_\A(\kb{i}{1}) = \Lambda_\B(\kb{3}{j}) = \Lambda_\B(\kb{j}{3}) = 0 \quad \text{if $i\neq 1$, $j\neq 3$} \,,
$$
hence the same equalities must hold with the channel $\Lambda$ replacing $\Lambda_\A$ and $\Lambda_\B$. It follows that $\Lambda$ is the measure-and-prepare channel
$$
\Lambda(T) = \sum_{i=1}^3 \tr{T\Eo(i)} \eta_i \,,
$$
where $\eta_1$, $\eta_2$ and $\eta_3$ are three fixed states. We clearly have $\Eo\comp\Lambda$, hence $\Lambda\in\simu_\chan(\Lambda_\Eo)$ by \eqref{eq:thm_qnd_sing_a}. This proves the inclusion $\csigma(\{\A, \B\})\subseteq\simu_\chan(\Lambda_\Eo)$, and thus completes the proof of \eqref{eq:csigma(A,B)}. Applying $\ctau$ to both sides of \eqref{eq:csigma(A,B)} and using Proposition \ref{prop:order_Galois}\eqref{it:b_prop_order_Galois}, we get the left equality in \eqref{eq:chain_leak}.
Finally, we have $\Eo \notin \simu_\obs(\{\A, \B\})$ since ${\rm rank}\,\Eo(i) = 1$ for all $i=1,2,3$ while ${\rm rank}\,\A(2) = {\rm rank}\,\B(1) = 2$. This proves the right inequality in \eqref{eq:chain_leak}.
\end{example}

%%%%%%%%%%
\section{Joint measurement closure map}\label{sec:joint}
%%%%%%%%%%%

In the previous section, we have introduced $\leak$ as the closure map on $\obs$ given by the Galois connection $(\csigma,\ctau)$. 
We have also discussed the physical interpretation of $\leak(X)$ for a subset of observables $X\subseteq\obs$, and we have observed the inclusion $\simu_\obs(X) \subseteq \leak(X)$. 
We have also seen that $\simu_\obs(X) = \leak(X)$ holds in some cases but not in all.

In this section, we introduce a third closure map on $\obs$ and describe its relation to $\leak$.
The \emph{joint measurement closure map} $\joint$ is the closure map that is determined by the joint measurability relation via Galois connection. 
In details, for a subset $X\subseteq\obs$, we denote by $J(X)$ the set of all observables $\B$ that are jointly measurable with every $\A\in X$. 
That is, $J: 2^{\obs} \to 2^{\obs}$ is defined by
$$
J(X) = \{\B\in\obs\mid\text{$\B$ is jointly measurable with every $\A\in X$}\}\,.
$$
Then, $(J,J)$ is the Galois connection induced by the joint measurability relation between observables as in \eqref{eq:gcR}. We denote by
\begin{equation*}
\joint = J^2
\end{equation*}
the associated closure map. The proofs of Propositions \ref{lem:order_Galois}, \ref{prop:order_Galois} and \ref{prop:simu-Galois} can be straightforwardly rewritten also for the Galois connection $(J,J)$.
In particular, we have
\begin{equation*}
J(X) = J(\simu_\obs(X)) \,,\qquad\quad \simu_\obs(X)\subseteq\joint(X) = \joint(\simu_\obs(X)) \,.
\end{equation*}

The following theorem establishes the relation between the closure maps $\joint$ and $\leak$.

\begin{theorem}\label{thm:leak-joint}
For any $X\subseteq\obs$, 
$$
\leak(X) \subseteq \joint(X) \, .
$$
\end{theorem}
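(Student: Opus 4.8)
The plan is to show that every observable in $\leak(X) = \ctau\csigma(X)$ is jointly measurable with every observable in $X$, which is exactly the statement $\leak(X) \subseteq \joint(X) = J(J(X))$. It suffices to prove the pointwise claim: if $\B \in \ctau\csigma(X)$ and $\A \in X$, then $\A$ and $\B$ are jointly measurable. Indeed, once this is established, we get $\leak(X) \subseteq J(X)$, but then a second application — noting that $X \subseteq \leak(X)$ by \eqref{eq:cl1} and using monotonicity — will need to be threaded carefully; in fact the cleanest route is to observe that $\leak(X) \subseteq J(X)$ for \emph{all} $X$ already forces $\leak(X) \subseteq J(J(X)) = \joint(X)$, because applying the inclusion to the set $J(X)$ in place of $X$ gives $\leak(J(X)) \subseteq J(J(X))$, and $X \subseteq J(J(X))$ together with monotonicity of $\leak$ yields $\leak(X) \subseteq \leak(J(X)) \subseteq \joint(X)$. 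So the whole theorem reduces to the single implication $\B \in \ctau\csigma(X),\ \A \in X \Rightarrow \A, \B$ jointly measurable.

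To prove that implication, I would fix $\A \in X$ and take a Naimark dilation $(V,\kk,\hat\A)$ of $\A$, so that the least disturbing channel $\Lambda_\A$ of \eqref{eq:least_dist} lies in $\csigma(\A)$, and since $\A \in X$ we have $\Lambda_\A \in \csigma(X)$. Now if $\B \in \ctau\csigma(X)$, then by definition $\B \comp \Lambda$ for every $\Lambda \in \csigma(X)$; in particular $\B \comp \Lambda_\A$. By \eqref{eq:ldc-1} (equivalently \eqref{eq:ldc-2}), $\ctau(\Lambda_\A) = \{\C \in \obs \mid \C \preccurlyeq \A\} = \simu_\obs(\A)$. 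Hence $\B \preccurlyeq \A$, wait — that is not quite right, since $\simu_\obs(\A)$ allows convex combinations of post-processings of $\A$, not merely $\B \preccurlyeq \A$ in the single-observable sense. But in any case $\B \in \simu_\obs(\A)$: $\B$ can be written as $\sum_i t_i\, \mu_i \circ \A$ with $\sum_i t_i = 1$. Since $\A$ is jointly measurable with itself, $\simu_\obs(\A)$ consists of observables all jointly measurable with $\A$ — more precisely, $\A \in J(\{\A\})$, and applying the analogue of Proposition~\ref{prop:order_Galois}\eqref{it:a_prop_order_Galois} for the pair $(J,J)$ (the excerpt explicitly states $J(X) = J(\simu_\obs(X))$), we get $J(\{\A\}) = J(\simu_\obs(\A))$, so $\simu_\obs(\A) \subseteq J(J(\{\A\})) = \joint(\{\A\})$, and every element of $\simu_\obs(\A)$ is jointly measurable with $\A$. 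Thus $\B$ is jointly measurable with $\A$, as required.

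Assembling the pieces: for arbitrary $X$ and arbitrary $\A \in X$, the least disturbing channel $\Lambda_\A \in \csigma(X)$, so any $\B \in \ctau\csigma(X)$ satisfies $\B \comp \Lambda_\A$, hence $\B \in \ctau(\Lambda_\A) = \simu_\obs(\A)$, hence $\B$ is jointly measurable with $\A$; as $\A \in X$ was arbitrary, $\B \in J(X)$. This gives $\leak(X) \subseteq J(X)$ for all $X$, and then the bootstrap in the first paragraph upgrades this to $\leak(X) \subseteq \joint(X)$.

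The main obstacle I anticipate is the bookkeeping between the single-observable post-processing preorder $\preccurlyeq$ and the simulability-closed set $\simu_\obs(\A)$: one must use $\ctau(\Lambda_\A) = \simu_\obs(\A)$ from \eqref{eq:ldc-2} (not just \eqref{eq:ldc-1}), and then argue that a convex mixture $\sum_i t_i\, \mu_i \circ \A$ of post-processings of a \emph{single} observable $\A$ is still jointly measurable with $\A$. This last point is where one genuinely uses that all $\A_i$ in the simulation equal the same $\A$: a common refinement for the mixture $\sum_i t_i\,\mu_i\circ\A$ together with $\A$ itself can be built from $\A$ (for instance, the observable with outcomes $(\omega,\text{flag }i)$ whose value is $t_i\,\mu_i(\cdot,\omega)\A(\omega)$, appropriately post-processed), and it is cleanest to invoke the already-granted identity $J(\{\A\}) = J(\simu_\obs(\A))$ rather than reconstructing this refinement by hand. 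Everything else is a direct chase through the definitions and the cited consequences of Theorem~\ref{thm:qnd}.
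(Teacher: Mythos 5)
Your argument breaks at the step ``since $\A \in X$ we have $\Lambda_\A \in \csigma(X)$.'' The inclusion runs the other way: $\{\A\}\subseteq X$ gives $\csigma(X)\subseteq\csigma(\A)$ by antitonicity, so $\Lambda_\A\in\csigma(\A)$ says nothing about membership in $\csigma(X)$ — the least disturbing channel of $\A$ is generally \emph{not} compatible with the other observables in $X$. Indeed, the intermediate claim you reduce everything to, $\leak(X)\subseteq J(X)$, is false in general: since $X\subseteq\leak(X)$, it would force $X\subseteq J(X)$, i.e.\ every set of observables would be pairwise jointly measurable. Example~\ref{ex:qubit} of the paper is an explicit counterexample: for $X=\{\A,\B\}$ with the sharp $\sigma_1$ and $\sigma_2$ observables one has $\leak(X)=\obs$, while $J(X)$ contains only observables jointly measurable with both sharp observables, so e.g.\ $\C(\pm)=\tfrac12(\id\pm\sigma_3)$ lies in $\leak(X)\setminus J(X)$ (and in that example $\Lambda_\A\notin\csigma(X)$, since $\csigma(X)$ consists only of completely depolarizing channels). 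The subsequent ``bootstrap'' is also unsound on its own terms: from $X\subseteq J(J(X))$ and monotonicity you get $\leak(X)\subseteq\leak(J(J(X)))$, not $\leak(X)\subseteq\leak(J(X))$, and $X\subseteq J(X)$ is exactly what you cannot assume.

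The repair requires a different idea, which is what the paper's proof supplies: instead of attaching the least disturbing channel to each $\A\in X$, attach to each $\A\in J(X)$ the measure-and-prepare channel $\Gamma_\A(\varrho)=\sum_\omega\tr{\varrho\A(\omega)}\kb{\delta_\omega}{\delta_\omega}$, for which compatibility with a channel encodes joint measurability of observables: by \cite[Prop.~7]{HeMi17}, $\B\comp\Gamma_\A$ if and only if $\A$ and $\B$ are jointly measurable. Hence $\A\in J(X)$ is equivalent to $\Gamma_\A\in\csigma(X)$, so $\{\Gamma_\A\mid\A\in J(X)\}\subseteq\csigma(X)$, and applying the antitone map $\ctau$ gives
$\leak(X)=\ctau\csigma(X)\subseteq\ctau\bigl(\{\Gamma_\A\mid\A\in J(X)\}\bigr)=J(J(X))=\joint(X)$ in one stroke, with no intermediate passage through $J(X)$. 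Your closing observation that $\simu_\obs(\A)\subseteq\joint(\{\A\})$ is fine, but it only handles post-processings of a single $\A$ and cannot bridge the gap above.
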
 

\begin{proof}
According to \cite{HeMi17}, for any observable $\A$ with outcomes in $\Omega$, we can define a measure-and-prepare channel $\Gamma_\A : \lh \to \mathcal{L}(\ell^2(\Omega))$, given as
$$
\Gamma_\A(\varrho) = \sum_\omega \tr{\varrho\A(\omega)} \kb{\delta_\omega}{\delta_\omega} \,.
$$
Here, $\ell^2(\Omega)$ is the Hilbert space of all complex valued functions on $\Omega$ endowed with the scalar product $\ip{f}{g} = \sum_\omega \overline{f(\omega)} g(\omega)$, and $\{\delta_\omega\}_{\omega\in\Omega}$ is the orthonormal basis of $\ell^2(\Omega)$ made up of all delta functions. By \cite[Prop.~7]{HeMi17}, we have the equivalences
$$
\A\in J(X) \quad\Leftrightarrow\quad \Gamma_A\in\csigma(X) \quad\Leftrightarrow\quad \A\in\ctau(\{\Gamma_\B\mid\B\in X\}) \,.
$$
Hence,
\begin{align*}
& \csigma(X) \supseteq \{\Gamma_\A\mid\A\in J(X)\} \\
& \qquad\Rightarrow\quad \ctau(\csigma(X)) \subseteq \ctau(\{\Gamma_\A\mid\A\in J(X)\}) = J(J(X)) \,,
\end{align*}
which is the claim.
\end{proof}

In the following example, we demonstrate that Theorem \ref{thm:leak-joint} can be used to obtain information about $\leak(X)$.

\begin{example}
This example is related to \cite{Busch86} where the compatibility of two unbiased qubit observables was characterized.
Let $\hh= \Cb^2$. 
An unbiased dichotomic observable $\A_\va$ is described 
as 
$$
\A_\va(\pm) = \half (\id\pm \va \cdot \vsigma) \, , 
$$
where $\va\in\Rb^3$ satisfies $\no{\va} \leq 1$; here, the value of $\no{\va}$ is the sharpness parameter. 
For each $\lambda\in [0,1]$, we introduce the set of observables
$$
\mathcal{A}_{\lambda} = \left\{\A_\va \mid \no{\va} \leq \lambda \right\} \,.
$$
In particular, $\mathcal{A}_1$ is the set of all unbiased dichotomic observables. Clearly, $\mathcal{A}_{\lambda} \subseteq \mathcal{A}_{\lambda'}$ if and only if $\lambda \leq \lambda'$. 
As shown in \cite[Cor.~4.6]{Busch86}, we have
$$
J(\mathcal{A}_{\lambda}) \cap \mathcal{A}_1 = \mathcal{A}_{\sqrt{1-\lambda^2}} \,.
$$
In particular, $J(\mathcal{A}_{\lambda}) \supseteq \mathcal{A}_{\sqrt{1-\lambda^2}}$ holds. 
Thus we obtain 
\begin{equation*}
J(J(\mathcal{A}_{\lambda})) \subseteq J(\mathcal{A}_{\sqrt{1-\lambda^2}}) \,.
\end{equation*}
and then
\begin{equation*}
\joint(\mathcal{A}_{\lambda})
\cap \mathcal{A}_1 
\subseteq \mathcal{A}_{\lambda} \,. 
\end{equation*}
As $\mathcal{A}_{\lambda} \subseteq \leak (\mathcal{A}_{\lambda})$ also holds, due to Theorem \ref{thm:leak-joint} we conclude
\begin{equation*}
\leak(\mathcal{A}_{\lambda}) \cap \mathcal{A}_1 = \joint(\mathcal{A}_{\lambda}) \cap \mathcal{A}_1 = \mathcal{A}_{\lambda}. 
\end{equation*}
Therefore, for any observable $\A_\va$ with $\no{\va} > \lambda$, there exists a channel (respectively, an observable) compatible with all observables in 
$\mathcal{A}_{\lambda}$ such that it is incompatible with $\A_\va$.
\end{example}

%%%%%%%%%%
\section{Discussion}\label{sec:discussion}
%%%%%%%%%%%

The mathematical formulation of the qualitative noise-disturbance relation roots to the compatibility of observables and channels. The relation fits to the general framework of Galois connections, which led us to introduce the closure map $\leak$ interpreted as the leak of information. This closure map is bounded by other closures as, for each $X \subseteq \mathcal{O}$, 
\begin{eqnarray*}
\simu_{\obs}(X) \subseteq \leak(X) \subseteq \joint(X),  
\end{eqnarray*} 
where $\simu_{\obs}(X)$ and $\joint(X)$ are defined without referring to $\chan$.  
We hope that we have been able to demonstrate that the noise-disturbance relation is a rich topic and there are still many aspects that have not yet been fully explored. 

Paul was one of the pioneers of investigating the mathematical structure and operational properties of quantum measurements.   
His research articles on this topic and three co-authored books \cite{QTM96,OQP97,QM16} serve as a starting point for anyone who wishes to delve into this subject.
We greatly miss him; he was a very generous person who was always open to new ideas and supported us as a mentor and as a friend.

\subsection*{Acknowledgements}
This work was supported by JSPS KAKENHI Grant Number 15K04998. 
 TH  acknowledges financial support from the Academy of Finland via the Centre of Excellence program (Project no.312058) as well as Project no.  287750.


\begin{thebibliography}{10}

\bibitem{Busch09}
P.~Busch.
\newblock ``{N}o {I}nformation {Wi}thout {D}isturbance": {Q}uantum
  {L}imitations of {M}easurement.
\newblock In J.~Christian and W.~Myrvold, editors, {\em {Q}uantum {R}eality,
  {R}elativistic {C}ausality, and {C}losing the {E}pistemic {C}ircle}.
  Springer-Verlag, 2009.

\bibitem{HeMi13}
T.~Heinosaari and T.~Miyadera.
\newblock Qualitative noise-disturbance relation for quantum measurements.
\newblock {\em Phys. Rev. A}, 88:042117, 2013.

\bibitem{HeMi15}
T.~Heinosaari and T.~Miyadera.
\newblock Universality of sequential quantum measurements.
\newblock {\em Phys. Rev. A}, 91:022110, 2015.

\bibitem{HeMi17}
T.~Heinosaari and T.~Miyadera.
\newblock Incompatibility of quantum channels.
\newblock {\em J. Phys. A: Math. Theor.}, 50:135302, 2017.

\bibitem{HaHeMi18}
E.~Haapasalo, T.~Heinosaari, and T.~Miyadera.
\newblock The unavoidable information flow to environment in quantum
  measurements.
\newblock {\em J. Math. Phys.}, 59:082106, 2018.

\bibitem{HeReRyZi18}
T.~Heinosaari, D.~Reitzner, T.~Ryb\'{a}r, and M.~Ziman.
\newblock Incompatibility of unbiased qubit observables and {P}auli channels.
\newblock {\em Phys. Rev. A}, 97:022112, 2018.

\bibitem{NaSe07}
A. Nayak and P. Sen.
\newblock Invertible quantum operators and perfect encryption of quantum
  states.
\newblock {\em Quantum Inf. Comput.}, 7(1-2):103--110, 2007.

\bibitem{MaMu90a}
H.~Martens and W.M. {de Muynck}.
\newblock Nonideal quantum measurements.
\newblock {\em Found. Phys.}, 20:255--281, 1990.

\bibitem{OsGuWiAc17}
M.~Oszmaniec, L.~Guerini, P.~Wittek, and A.~Ac\'{\i}n.
\newblock Simulating positive-operator-valued measures with projective
  measurements.
\newblock {\em Phys. Rev. Lett.}, 119:190501, 2017.

\bibitem{GuBaCuAc17}
L.~Guerini, J.~Bavaresco, M.~T. Cunha, and A.~Ac\'{\i}n.
\newblock Operational framework for quantum measurement simulability.
\newblock {\em J. Math. Phys.}, 58:092102, 2017.

\bibitem{FiHeLe18}
S.N. Filippov, T.~Heinosaari, and L.~Lepp\"aj\"arvi.
\newblock Simulability of observables in general probabilistic theories.
\newblock {\em Phys. Rev. A}, 97:062102, 2018.

\bibitem{Ore44}
O.~Ore.
\newblock {G}alois {C}onnexions.
\newblock {\em Trans. Amer. Math. Soc.}, 55(3):493--513, 1944.

\bibitem{Everett44}
C.J. Everett.
\newblock Closure {O}perators and {G}alois {T}heory in {L}attices.
\newblock {\em Trans. Amer. Math. Soc.}, 55(3):514--525, 1944.

\bibitem{KoppitzDenecke}
J.~Koppitz and K.~Denecke.
\newblock {\em {$M$}-solid varieties of algebras}, volume~10 of {\em Advances
  in Mathematics (Springer)}.
\newblock Springer, New York, 2006.

\bibitem{Ozawa84}
M. Ozawa.
\newblock Quantum measuring processes of continuous observables.
\newblock {\em J. Math. Phys.}, 25:79--87, 1984.

\bibitem{HeMiRe14}
T.~Heinosaari, T.~Miyadera, and D.~Reitzner.
\newblock Strongly incompatible quantum devices.
\newblock {\em Found. Phys.}, 44:34--57, 2014.

\bibitem{Arveson69}
W.~Arveson.
\newblock Subalgebras of {$C^{\ast} $}-algebras.
\newblock {\em Acta Math.}, 123:141--224, 1969.

\bibitem{Raginsky03}
M.~Raginsky.
\newblock Radon-{N}ikodym derivatives of quantum operations.
\newblock {\em J. Math. Phys.}, 44:5003--5020, 2003.

\bibitem{HeWo10}
T.~Heinosaari and M.M. Wolf.
\newblock Nondisturbing quantum measurements.
\newblock {\em J. Math. Phys.}, 51:092201, 2010.

\bibitem{Busch86}
P.~Busch.
\newblock Unsharp reality and joint measurements for spin observables.
\newblock {\em Phys. Rev. D}, 33:2253--2261, 1986.

\bibitem{QTM96}
P.~Busch, P.J. Lahti, and P.~Mittelstaedt.
\newblock {\em The Quantum Theory of Measurement}.
\newblock Springer-Verlag, Berlin, second revised edition, 1996.

\bibitem{OQP97}
P.~Busch, M.~Grabowski, and P.J. Lahti.
\newblock {\em Operational Quantum Physics}.
\newblock Springer-Verlag, Berlin, 1997.
\newblock second corrected printing.

\bibitem{QM16}
P.~Busch, P.~Lahti, J.-P. Pellonp\"a\"a, and K.~Ylinen.
\newblock {\em Quantum {M}easurement}.
\newblock Springer, 2016.

\end{thebibliography}
\end{document}